\newcommand{\excise}[1]{}
\newenvironment{manualtheorem}[1]{%
  \manualtheoreminner
}{\endmanualtheoreminner}
\renewcommand\_{\textunderscore\allowbreak}
\newtheorem{theorem}{Theorem}
\newtheorem{lemma}{Lemma}
\DeclareMathOperator*{\argmax}{argmax}
\DeclareMathOperator\rank{rank}
\DeclareMathOperator\corank{corank}
\newcommand{\RNum}[1]{\uppercase\expandafter{\romannumeral #1\relax}}
\begin{document}
	
%
	\title{\mbox{}\\[-11ex]Tree Representations of Brain Structural  Connectivity via Persistent Homology}
	\author{\vspace{1em}Didong Li$^{1}$, Phuc Nguyen$^2$, Zhengwu Zhang$^3$ and David B Dunson$^{2}$ \\ 
	{\em Department of Biostatistics$^1$ and Department of Statistics and Operations Research$^3$}\\
	{\em University of North Carolina at Chapel Hill}\\
			{\em Department of Statistical Science$^2$, Duke University}}
	\date{\vspace{-5ex}}
	\maketitle
%


\begin{abstract}
The brain structural connectome is generated by a collection of white matter fiber bundles constructed from diffusion weighted MRI (dMRI), acting as highways for neural activity.  There has been abundant interest in studying how the structural connectome varies across individuals in relation to their traits, ranging from age and gender to neuropsychiatric outcomes.  After applying tractography to dMRI to get white matter fiber bundles, a key question is how to represent the brain connectome to facilitate statistical analyses relating connectomes to traits.  The current standard divides the brain into regions of interest (ROIs), and then relies on an {\em adjacency matrix} (AM) representation.  Each cell in the AM is a measure of connectivity, e.g., number of fiber curves, between a pair of ROIs.  Although the AM representation is intuitive, a disadvantage is the high-dimensionality due to the large number of cells in the matrix.  This article proposes a simpler tree representation of the brain connectome, which is motivated by ideas in computational topology and takes topological and biological information on the cortical surface into consideration.  We demonstrate that our tree representation preserves useful information and interpretability, while reducing dimensionality to improve statistical and computational efficiency.  Applications to data from the Human Connectome Project (HCP) are considered and code is provided for reproducing our analyses.
\end{abstract}
\noindent \textbf{Key Words}: 
Adjacency matrix, brain connectome, persistent homology, structural connectivity, tree.

\section{Introduction}
\label{sec:intro}
The human brain structural connectome, defined as the white matter fiber tracts connecting different brain regions, plays a central role in understanding how brain structure impacts human function and behavior \citep{park2013review}. Recent advances in neuroimaging methods have led to increasing collection of high quality functional and structural connectome data in humans.  There are multiple large datasets available containing 1,000s of connectomes, including the Human Connectome Project (HCP) and the UK Biobank \citep{essen2012hcp, bycroft2018biobank}. We can now better relate variations in the connectomes between individuals to phenotypic traits \citep{wang2012alzheimer, hong2019atypical, roy2019parkinson}. However, the large amount of data also creates the need for informative and efficient representations of the brain and its structural connectome \citep{gelletta2017storage, zhang2018pipeline, pizarro2019database, zalesky2019buildconnectomes, jeurissen2017tractography}. The main focus of this article is a novel and efficient representation of the processed connectome data as an alternative to the adjacency matrix (AM) as input to statistical analyses.

Diffusion magnetic resonance imaging (dMRI) uses diffusion-driven displacement of water molecules in the brain to map the organization and orientation of white fiber tracts on a microscopic scale \citep{bihan2003dmri}. Applying tractography to the dMRI data, we can construct a ``tractogram'' of 3D trajectories of white fiber tracts \citep{jeurissen2017tractography}. It is challenging to analyze the tractogram directly because 1) the number of fiber trajectories is extremely large; 2) the tractogram contains geometric structure; and 3) alignment of individual tracts between subjects remains difficult \citep{zhang2019tensor}. Because of these challenges, it is common to parcellate the brain into anatomical regions of interest (ROIs) \citep{desikan2006automated,destrieux2010automatic, he2010graphbase}, and then extract fiber bundles connecting ROIs. We can then represent the brain structural 
connectome as a weighted network in a form of an AM, a $p$ by $p$ symmetric matrix, with $i,j$-th entry equal to the number of fiber curves connecting region $i$ and region $j$, where $p$ is the total number of regions in the parcellation. 

Statistical analyses of structural connectomes are typically based on this AM representation, 
which characterizes the connectome on a fixed scale depending on the resolution of ROIs.  However, 
research has shown that brain networks fundamentally organize as multi-scale and hierarchical entities \citep{bassett2013multiscale}. Some research has attempted to analyze community structures in functional and structural brain networks across resolutions \citep{bassett2017multiscale}; however, these works are limited to community detection. Brain atlases have anatomically meaningful hierarchies but only one level can be captured by the AM representation. If the lowest level in the atlas hierarchy with the greatest number of ROIs is used, this creates a very high-dimensional representation of the brain. The number of pairs of ROIs often exceeds the number of connectomes in the dataset.  This presents statistical and computational challenges, with analyses often having low power and a lack of interpretability 
 \citep{poldrack2017reproduce, cremers2017powerhcp}.

For instance, to infer relationships between the connectome and a trait of interest, it is common to conduct hypothesis tests for association between each edge (connection strength between a pair of ROIs) and the trait \citep{fornito2016book, gou2018test,wang2019symmetric,lee2021test}. As the number of edges is very large, such tests will tend to produce a large number of type I errors without multiple testing adjustment.  If a Bonferroni adjustment is used, then the power for detecting associations between particular edges and traits will be very low.  A common alternative is to control for false discovery rate, for example via the Benjamini-Hochberg approach \citep{nichols2002fdr}. However, such corrections cannot solve the inevitable increase in testing errors that occur with more ROIs.  An alternative is to take into account the network structure of the data in the statistical analysis; see, for example \cite{Leek18718, fornito2016book, ALBERTON2020116760}. Such approaches can potentially improve power to detect differences while controlling type I errors through appropriate borrowing of information across the edges or relaxation of the independence assumption, but statistical and computational problems arise as the number of ROIs increases.
Finally, it is common to vectorize the lower-triangular portion of the brain AM and then apply regression or classification methods designed for high-dimensional features (e.g., by penalizing using the ridge or lasso penalties).

To make the problems more concrete, note that a symmetric $p \times p$ connectome AM has $\frac{(p-1)p}{2}$ pairs of brain ROIs.  For the popular Desikan-Killiany parcellation \citep{desikan2006automated} with $p=68$,
$\frac{(p-1)p}{2} = 2278$.  A number of other common atlases have many more than $p=68$ brain regions, leading to a much larger number of edges.  Even if one is relying on data from a large cohort, such as the UK Biobank, the sample size (number of subjects) is still much smaller than the connectivity features, leading to statistical efficiency problems without reducing the dimensionality greatly from $q=\frac{(p-1)p}{2}$.    Dimensionality reduction methods, such as Principal Components Analysis (PCA), tensor decomposition, or non-negative matrix factorization, can be a remedy for studying relationships between the connectome and traits \citep{yourganov2014pca, smith2015cca, zhang2019tensor,patel2020nmf}.
But, they may lack interpretability and fail to detect important relationships when
the first few principal components are not biologically meaningful or predictive of traits.

In this paper, we propose a new representation of the brain connectome that is inspired by ideas in computational topology, a field focused on developing computational tools for investigating topological and geometric structure in complex data \citep{carlsson2009topology,edelsbrunner2010computational}.  A common technique in computational topology is {\em persistent homology}, which investigates geometry/topology of the data by assessing how features of the data come and go at different scales of representation.  Related ideas have been used successfully in studying brain vascular networks \citep{bendich2016persistent}, hippocampal spatial maps \citep{dabaghian2012topological}, dynamical neuroimaging spatiotemporal representations \citep{geniesse2019generating}, neural data decoding \citep{rybakken2019decoding} and so on \citep{sizemore2019importance}. 
We propose a fundamentally different framework, which incorporates an anatomically meaningful hierarchy of brain regions within a persistent homology approach to produce a new tree representation of the brain structural connectome. This representation reduces dimensionality substantially relative to the AM approach, leading to statistical and computational advantages, while enhancing interpretability. After showing our construction and providing mathematical and biological justification, we contrast the new representation with AM representations in analyses of data from the Human Connectome Project (HCP).

\section{Method}\label{sec:method}	

\subsection{Tree construction}

There is a rich literature defining a wide variety of parcellations of the brain into regions of interest (ROIs) that are motivated by a combination of biological and statistical justifications \citep{desikan2006automated, destrieux2010automatic, klein2012dkt}. The ROIs should ideally be chosen based on biological function and to avoid inappropriately merging biologically and structurally distinct regions of the brain.  Also, it is important to not sub-divide the brain into regions that are so small that (a) it may be difficult to align the data for different subjects and (b) the number of ROIs is so large that the statistical and computational problems mentioned in the introduction are exacerbated.  Based on such considerations, the Desikan-Killiany (DK) atlas is particularly popular, breaking the brain into $p=68$ ROIs \citep{desikan2006automated}.

A parcellation such as DK is typically used to construct an AM representation of the structural connectome at a single level of resolution. However, we instead propose to introduce a multi-resolution tree in which we start with the entire cortical surface of the brain as the root node, and then divide into the right and left hemisphere to produce two children of the root node. We further sub-divide the two hemispheres into large sub-regions, then divide these sub-regions into smaller regions. We continue this sub-division until obtaining the regions of DK (or another target atlas) as the leaf nodes in the tree. In doing this, we note that there is substantial flexibility in defining the tree structure; we need not choose a binary tree and can choose the regions at each level of the tree based on biological function considerations to the extent possible. Finally, we summarize the connectivity information at different resolutions in the weights of the tree nodes.

In this article, we focus primarily on the following tree construction based on the DK atlas for illustration, while hoping that this work motivates additional work using careful statistical and biological thinking to choose the regions at each layer of the tree. The DK parcellation is informed by standard neuroanatomical conventions, previous works on brain parcellations, conversations with expert scientists in neuroscience, and anatomic information on local folds and grooves of the brain \citep{desikan2006automated}. The DK protocol divides each hemisphere into 34 regions that can be organized hierarchically \citep{desikan2006automated}. Specifically, each hemisphere has six regions: frontal lobe, parietal lobe, occipital lobe, cingulate cortex, temporal lobe, and insula. Most of these regions have multiple sub-regions, many of which are further sub-divided in the DK atlas. The full hierarchy can be found in the Appendix. We calculate the weight of each node as the sum of all connections between its immediate children. For instance, the weight at the root node will equal the sum of all the inter-hemisphere connections. The weight of the left hemisphere node will equal the sum of all connections among the left temporal lobe, frontal lobe, parietal lobe, occipital lobe, cingulate cortex, and insula. The weight of the left temporal lobe is the sum of all connections between regions within the temporal medial aspect and lateral aspect. We continue this calculation for all nodes that have children. The weights of a leaf node will be all connections within that region, or equivalently, the diagonal element at that region's index on the AM representation. 

\begin{figure}[ht]
    \centering
    \includegraphics[width=\textwidth, trim={0 50 0 50}, clip]{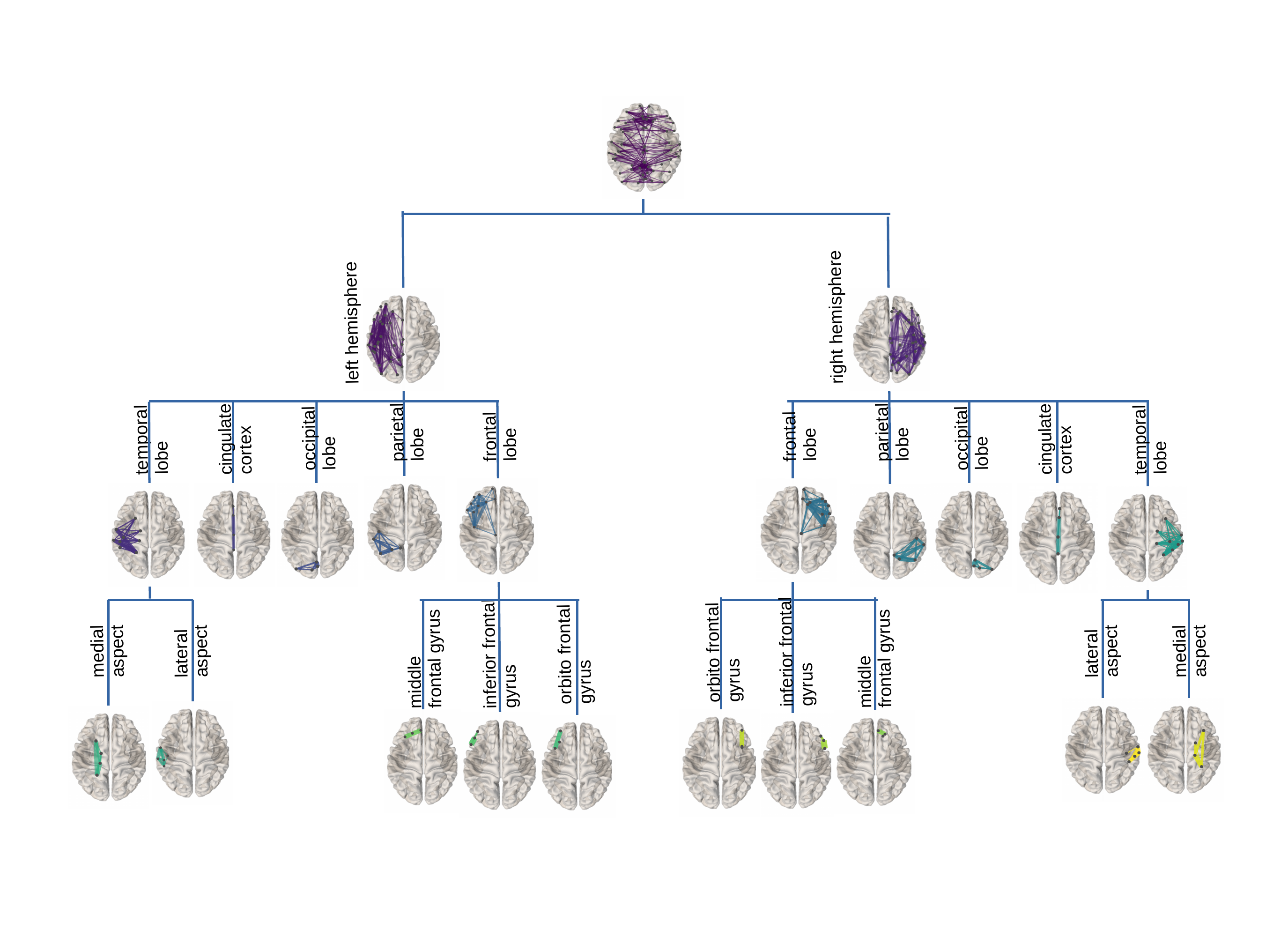}
    \caption{Visualization of the DK tree structure and connections summarized at each node on the brain image. Leaves have no internal connections and are omitted for readability.}
    \label{fig:brain-tree}
\end{figure}

To rephrase the tree construction in math notations, let $A=A^1_1$ be the whole brain, $A^2_1$ be the left hemisphere, $A^2_2$ be the right hemisphere, and $A^l_i$
be the $i$-th region at level $l$, where $l=1,\cdots,L$ and $i=1,\cdots,N_l$ ($L=4$, $N_L=10$ for DK). Define the weight of region $A^l_i$, denoted by $H^l_i$, as $H^{l}_i\coloneqq \text{number of fibers connecting any two children of } A^{l}_i$ for all $l=1,2,\cdots,L$ and $i=1,\cdots,N_l$.
For example, for DK-based tree, $H_1^2$ is the number of fibers connecting children of left hemisphere, that is, fibers between temporal lobe, cingulate cortex, occipital lobe, parietal lobe, and frontal lobe. Similarly, $H_{1}^3$ is the number of fibers connecting children of temporal lobe, that is, fibers between medial aspect and lateral aspect.

    

Figure \ref{fig:brain-tree} provides an illustration of the DK-based tree structure and the connections summarized at each node. Often the connections within the leaf nodes are not estimated in the AM representation (i.e., the diagonal elements of an AM are set to zero), thus, they have weights zero and are omitted from the figure. We also introduce a more compact visualization of the tree based on circle chord plots. Figure \ref{fig:brain-chord} shows the steps in our pipeline to construct a brain tree. The final output circle plot displays bundles of white matter tracts as overlapping chords scaled inversely proportional to the level of the tree they belong to and color-coded by the node they belong to. 

\begin{figure}[ht]
    \centering
    \includegraphics[width=\textwidth, trim={0 10 0 10}, clip]{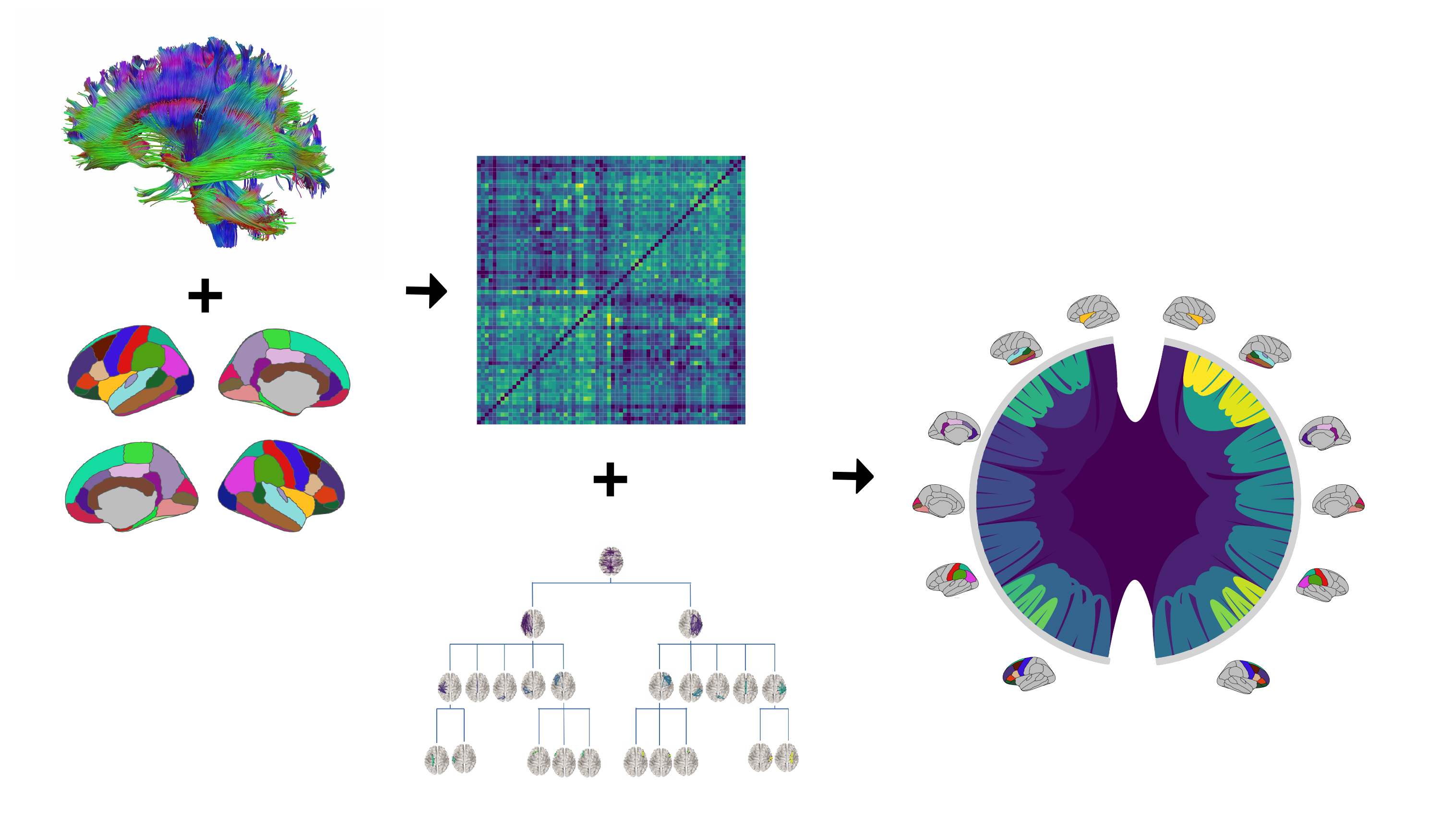}
    \caption{The tree construction pipeline: first, tractography and Desikan-Killiany (DK) protocol are used to estimate an adjacency matrix; then, the adjacency matrix and a hierarchy based on the DK protocol are combined to produce the tree representation. We introduce a compact circle chord plot that shows the tree representation of white matter fiber tracts connecting brain regions from the DK protocol. Connections are color-coded by the node they belong to, the same color codes as in Figure \ref{fig:brain-tree}, and scaled inversely proportional to the level they belong to in the tree.}
    \label{fig:brain-chord}
\end{figure}

\begin{figure}
    \centering
    \includegraphics[width=\textwidth]{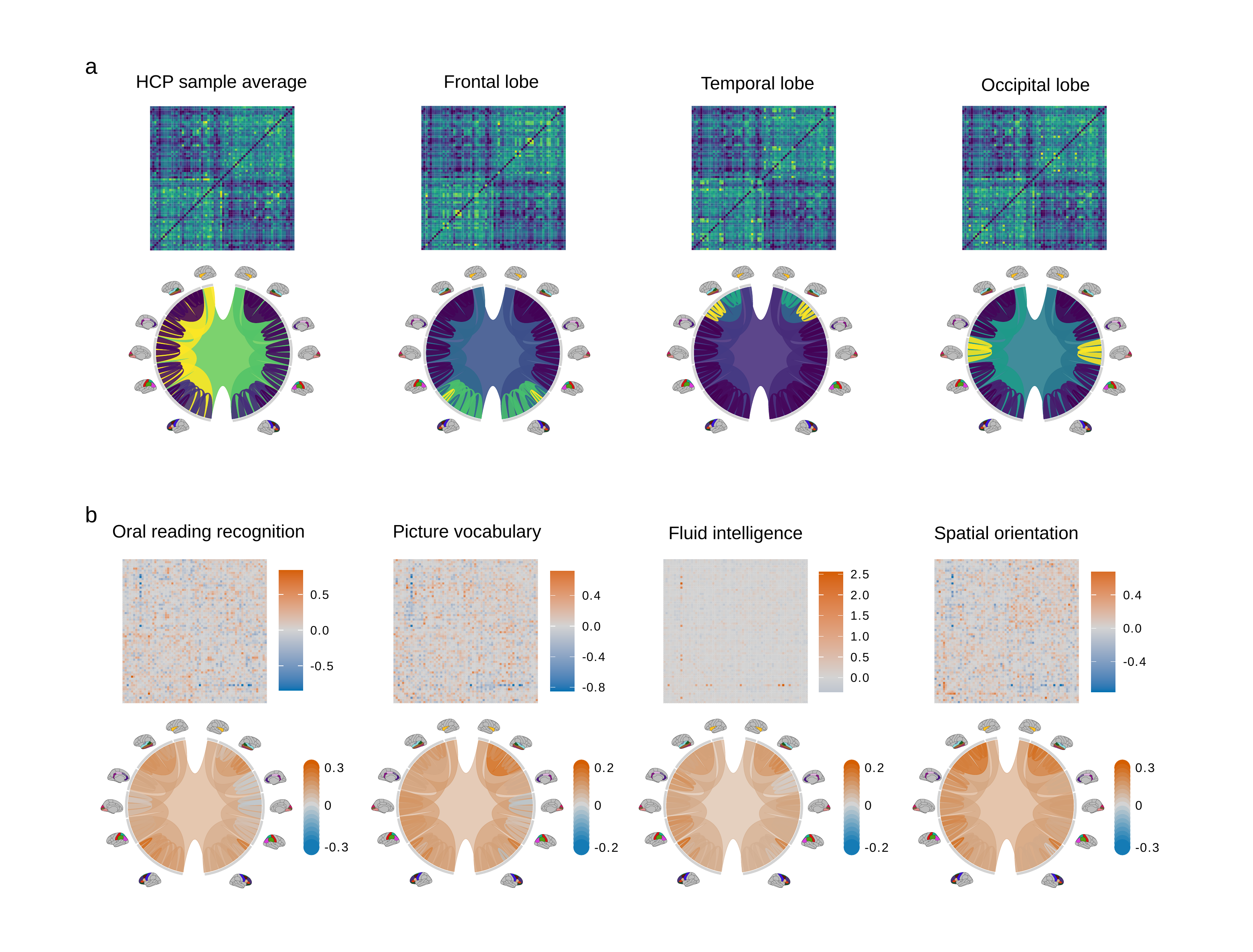}
    \caption{Examples of adjacency matrices and corresponding tree representations of brain connectomes. a) The sample average from the HCP data (left most) and its modifications so that different regions of the brain (i.e. the frontal lobe, temporal lobe, and occipital lobe) have denser connections. Because these modifications are unrealistic and only meant for better visualization of the differences between AM and tree representations, color scales are omitted. Values in the adjacency matrices are normalized for visual clarity. b) Percent difference in brain connections between the top and bottom ten percent of scores in four cognitive tasks.}
    \label{fig:tree-examples}
\end{figure}





\subsection{A Persistent Homology Interpretation}
Persistent homology is a method for computing topological features of a space at different resolutions. As quantitative features of noisy data, persistent homology is less sensitive to the choice of coordinate and metric and robust to noise \citep{carlsson2009topology}. The key construction in persistent homology is the filtration, a multi-scale structure similar to the brain network. As a result, we can interpret the above defined $H^l_k$ as corank of the  persistent homology. As a rigorous definition of persistent homology is highly technical, we present a simple version and leave the rigorous version and the proof to the Appendix. For relevant background in topology, see
 \cite{munkres2016topology,hatcher2002algebraic}. 
\begin{theorem}\label{thm:persistent}
	$H^l_k$ is the corank of the persistent homology.	
\end{theorem}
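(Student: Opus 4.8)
The plan is to realize the connectome as a filtered cell complex whose filtration is the sequence of resolution levels of the tree, and then to read off $H^l_k$ from the maps induced on homology. Concretely, I would model the connectome by a $1$-dimensional complex $G$ whose $0$-cells are the ROIs and whose $1$-cells are the individual fiber curves (a multigraph; if a genuinely simplicial model is wanted, barycentrically subdivide each fiber edge once so that parallel fibers become distinct length-two paths). The tree partitions the ROIs inside a node $A^l_k$ among its children, and to that node I attach the pair of subcomplexes $L_0 \subseteq L_1$, where $L_1$ is the full subgraph of $G$ spanned by the ROIs of $A^l_k$ and $L_0$ is obtained from $L_1$ by deleting every fiber whose two endpoints lie in different children. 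Then $L_0$ and $L_1$ have the same $0$-skeleton and $L_1$ is obtained from $L_0$ by adjoining exactly $H^l_k$ edges; for a leaf node one takes $L_0$ to be the discrete vertex set and $L_1$ the within-region subgraph, which recovers the diagonal AM entry.

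Next I would compute the persistent homology of the inclusion $L_0 \hookrightarrow L_1$, which is one step of the global level-by-level filtration once all the steps touching only $A^l_k$ are grouped together. Since $L_0$ and $L_1$ share their $0$-skeleton, the induced map $H_0(L_0)\to H_0(L_1)$ is surjective, so on $H_0$ the only invariant is $\dim\ker\bigl(H_0(L_0)\to H_0(L_1)\bigr)$, the number of components that fuse; on $H_1$ the induced map is injective, so the only invariant is $\dim\operatorname{coker}\bigl(H_1(L_0)\to H_1(L_1)\bigr)$, the number of independent cycles created. Adding an edge to a graph does exactly one of these two things, so summing over the $H^l_k$ new edges gives
\[
\dim\ker\bigl(H_0(L_0)\to H_0(L_1)\bigr)+\dim\operatorname{coker}\bigl(H_1(L_0)\to H_1(L_1)\bigr)=H^l_k .
\]
Equivalently — and this is the form I would actually state — the long exact sequence of the pair collapses (there are no $2$-cells and $H_0(L_0)\onto H_0(L_1)$), so that $L_1/L_0$ is a wedge of $H^l_k$ circles and $H_*(L_1,L_0)$ is concentrated in degree $1$ with $\dim H_1(L_1,L_0)=H^l_k$; this relative group is precisely the kernel/cokernel package above, i.e. the corank of the persistent homology at this filtration step.

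The remaining step is bookkeeping. I would check that, as $(l,k)$ ranges over the tree, the pairs $(L_0,L_1)$ are exactly the consecutive terms of the global resolution filtration: each fiber is "born" at the unique node that is the deepest common ancestor of its two endpoints (there its endpoints lie in distinct children, at every higher node they lie in the same child, and lower nodes do not contain both), so $\sum_{l,k}H^l_k$ is the total fiber count and the node-by-node coranks reassemble the full persistence module of $G$. I would also record the rank--nullity identity $\#\{\text{edges}\}=\beta_1+\#\{\text{vertices}\}-\beta_0$, which is what makes the two summands in the display combine cleanly, and note that under the (biologically natural) assumption that each child induces a connected subgraph the $H_0$ term is simply $m-1$ for an $m$-child node.

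The main obstacle is conceptual rather than computational: pinning down what "the corank of the persistent homology" should mean so that it is a single well-defined number equal to $H^l_k$. The count of new fibers splits a priori into a degree-$0$ part (children whose internal components merge) and a degree-$1$ part (cycles running through several children), and only the relative-homology packaging $\dim H_1(L_1,L_0)$, or the connectedness assumption on children, fuses it into one corank; getting this definition right — together with fixing the direction of the filtration, since refining the partition is naturally an inverse system and one must choose between the persistent homology of $L_0\hookrightarrow L_1$ and the persistent cohomology of the coarsening (quotient) maps — is where the care lies. Handling the multigraph (several fibers between one ROI pair) inside a simplicial rather than cellular framework is a minor further wrinkle, dealt with by the subdivision above.
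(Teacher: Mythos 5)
Your proof is correct, but it takes a genuinely different route from the paper's. The paper models each region $A^l_j$ as a single point carrying one loop per internal fiber, so that $A^l_i\cup A^l_j$ is a disjoint union of wedges of circles and the inclusion into $A^{l-1}_k$ identifies the two base points; every cross-child fiber then closes up into a \emph{new} loop in the parent, the induced map on $H_1$ is injective with image of rank equal to the number of within-child fibers, and the corank of that image inside $H_1(A^{l-1}_k)$ is exactly $H^l_k$. In other words, the paper makes the statement live in a single homological degree by collapsing each child to a point at the outset, which makes the $H_0$ bookkeeping vanish identically. You instead keep the ROIs as distinct vertices of a multigraph, so the $H^l_k$ new edges split between merging components (a kernel in degree $0$) and creating cycles (a cokernel in degree $1$), and you recover the single number only through the relative group $\dim H_1(L_1,L_0)=H^l_k$ or the Euler-characteristic identity. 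The definitional difficulty you flag at the end --- that ``the corank of the persistent homology'' is not one well-defined number in your model without the relative-homology packaging or a connectedness hypothesis on the children --- is real, and it is precisely the issue the paper sidesteps (rather than resolves) by its point-with-loops construction. Your version is more faithful to the actual graph structure of the connectome and separates the connectivity information from the cycle information; the paper's version buys a literal one-line ``corank of persistent $H_1$'' statement at the cost of a degenerate topological model. Both arguments establish the count $H^l_k$ as intended.
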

Theorem \ref{thm:persistent} provides a topological interpretation of $H_k^l$, and partially explains why the tree $T=(A,H)$ is a powerful representation of the brain network.

\section{Results}\label{sec:result}
\subsection{Data description}
We investigate our tree representation's ability to preserve information from the AM representation while improving interpretability in analyses relating brain structures to behavioral traits. We use neuroimaging data and scores on various behavioral assessments from the HCP \citep{glassier2013hcp, glasser2016hcp}. The HCP collects high-quality diffusion MRI (dMRI) and structural MRI (sMRI) data, characterizing brain connectivity of 1200 healthy adults, and enables comparison between brain circuits, behavior and genetics at the individual subject level \citep{essen2012hcp}. We use data from the 2017 release accessed through ConnectomeDB. Details on data acquisition and preprocessing pipeline of dMRI and sMRI data in the HCP can be found in \cite{essen2012hcp, glassier2013hcp,glasser2016hcp}. To produce connectome data from raw dMRI/sMRI data, we use the reproducible probabilistic tractography algorithm in \cite{girard2014tracto} to construct tractography data for each subject, the DK atlas \citep{desikan2006automated} to define the brain parcellation, and the preprocessing pipeline in \cite{zhang2018pipeline} to extract weighted matrices of connections. More details of these steps can be found in \cite{zhang2019tensor}. In the extracted data matrices, each connection is described by a scalar number. The HCP data include scores for many behavioral traits related to cognition, motor skills, substance use, sensories, emotions, personalities, and many others. Details can be found at \url{https://wiki.humanconnectome.org/display/PublicData/HCP-YA+Data+Dictionary-+Updated+for+the+1200+Subject+Release}. The final data set consists of $n = 1065$ brain connectomes and 175 traits. 

 To study relationships between the AM and tree-based representations of the brain connectome, 
 we use a simple two-step approach: first vectorizing the connectome data, and then applying a regression method. 
 Since a connectome matrix is symmetric, we only vectorize the upper triangular part, resulting in a 2278-dimensional vector for each matrix. We also remove pairs of ROIs that show no variability in connectivity across subjects, reducing the vectorized matrices to 2202 dimensions. We construct the tree representations of the connectomes based on the construction described in Section 2.1. 
 Self-edges are not recorded in the connectome matrices, so the leaves in the tree representation have zero weights. As a result, we remove the leaf nodes, leaving the vectorized trees to be 23-dimensional instead of 91-dimensional. 
 
 To apply regression algorithms, we first reduce the dimension of the adjacency matrix to $K\ll 2202$ by selecting the top principal components (PCs). We observed that the regression MSEs of commonly used algorithms including linear regression, decision trees, support vector machines (SVM), boosting, Gaussian process (GP) regression, etc, did not decrease when we increased the number of PCs, and \cite{zhang2019tensor} also observed that the regression performance is robust for $K=~20\sim60$. As a result, we keep the first $K=23$ PCs to match the dimension of the tree for a fair comparison. 
 We refer to the data from the vectorized trees as $D_{tree}$ and that from principal components of the vectorized matrices as $D_{PCA}$. Figure \ref{fig:tree-examples}a compares the visualization of the average connectomes matrix to the average connectomes tree from the HCP data. Figure \ref{fig:tree-examples}b shows the visualization of the tree representation as a more effective exploratory analysis tool since one can easily find regions of the brain with large percentage change between the top and bottom scores in four cognitive traits. 
 
\subsection{Canonical correlation analysis}

Human brain connectivity has been shown to be capable of explaining significant variation in a variety of human traits. Specifically, data on functional and structural connectivity have been used to form latent variables that are positively correlated with desirable, positive traits (i.e., high scores on fluid intelligence or oral reading comprehension) while also being negatively correlated with undesirable traits (i.e., low sleep quality, frequent use of tobacco or cannabis) \citep{smith2015cca, tian2020gradient}. In this first analysis, we compare how strongly latent variables inferred from the two representations are associated with the perceived desirability of traits. We choose to work with a subset of 45 traits that have been shown to strongly associate with brain connectome variations, including cognitive traits, tobacco/drug use, income, years of education, and negative and positive emotions \citep{zhang2019tensor, smith2015cca}. Each cognitive trait is often measured by several metrics in an assessment. We only include a primary metric that has been adjusted for the subject’s age if applicable. We include only traits with continuous values to simplify the analyses. The full list of variables used can be found in the Appendix. We will use a statistical method called canonical correlation analysis (CCA), which finds linear combinations of the predictors that are most correlated with some other linear combinations of the outcomes. In our case, the predictors are principal components of the vectorized AM or features of the vectorized trees, and the outcomes are scores measured on 45 traits. Mathematically, let $X \in \mathbb{R}^{n \times p}$ be the feature matrix ($n = 1065, p = 23$), and $Y \in \mathbb{R}^{n \times q}$ be the outcome matrix ($q=39$ with some traits being removed due to extensive missing values, see the next paragraph for more details). Additionally, let $\boldsymbol{a}_k \in \mathbb{R}^p$ and $\boldsymbol{b}_k \in \mathbb{R}^q$, $k = 1,..., \min(p, q)$ be pairs of linear transformation of the data. We refer to $(\boldsymbol{r}_k, \boldsymbol{s}_k) = (X\boldsymbol{a}_k, Y\boldsymbol{b}_k)$ as the feature and outcome canonical variates or the $k^{th}$ pair of canonical variates. CCA aims to learn $\boldsymbol{a}_k, \boldsymbol{b}_k$ such that $(\boldsymbol{a}_k, \boldsymbol{b}_k) = \argmax_{\boldsymbol{a}_k, \boldsymbol{b}_k} \text{corr}(\boldsymbol{r}_k, \boldsymbol{s}_k)$, so the linear transformations maximize the correlation between components of the canonical variates pair. Pairs of canonical variates are also constrained to be orthogonal, that is, $\boldsymbol{r}_k^T\boldsymbol{r}_h=0$ and $\boldsymbol{s}_k^T\boldsymbol{s}_h=0$ for $k \neq h$. 

Since this analysis considers all traits simultaneously, we remove traits with extensive missing values (i.e., more than 10\% of all obsevations) and are left with 39 traits. We then normalize $D_{tree}$, $D_{PCA}$ and these 39 trait scores, and fill in missing values with the feature's mean. We fit two CCAs using $D_{tree}$ and $D_{PCA}$ separately, and use the Wilks's lambda test to check that the first canonical variate pair, which has the largest co-variation, is significant at the 5\% level in each model. We hand-label the traits as desirable or not desirable based on previous research \citep{smith2015cca, tian2020gradient}. Figure \ref{fig:cca} plots the correlation of the trait scores with the first feature canonical variate and color-codes traits by desirability. Traits with smaller fonts have smaller contributions to the linear combination that makes up the first outcome canonical variate. It shows that desirable traits tend to be more highly negatively correlated with the first feature canonical variate compared to the undesirable traits, which tend to be weakly correlated. This is most clear with traits with strong signal such as fluid intelligence and spatial orientation. With the same number of features, the tree representation produces a canonical variate that can separate traits into groups of the same desirability slightly better than the principal components of the AM can.

\begin{figure}[!h]
    \centering
    \includegraphics[width=\textwidth]{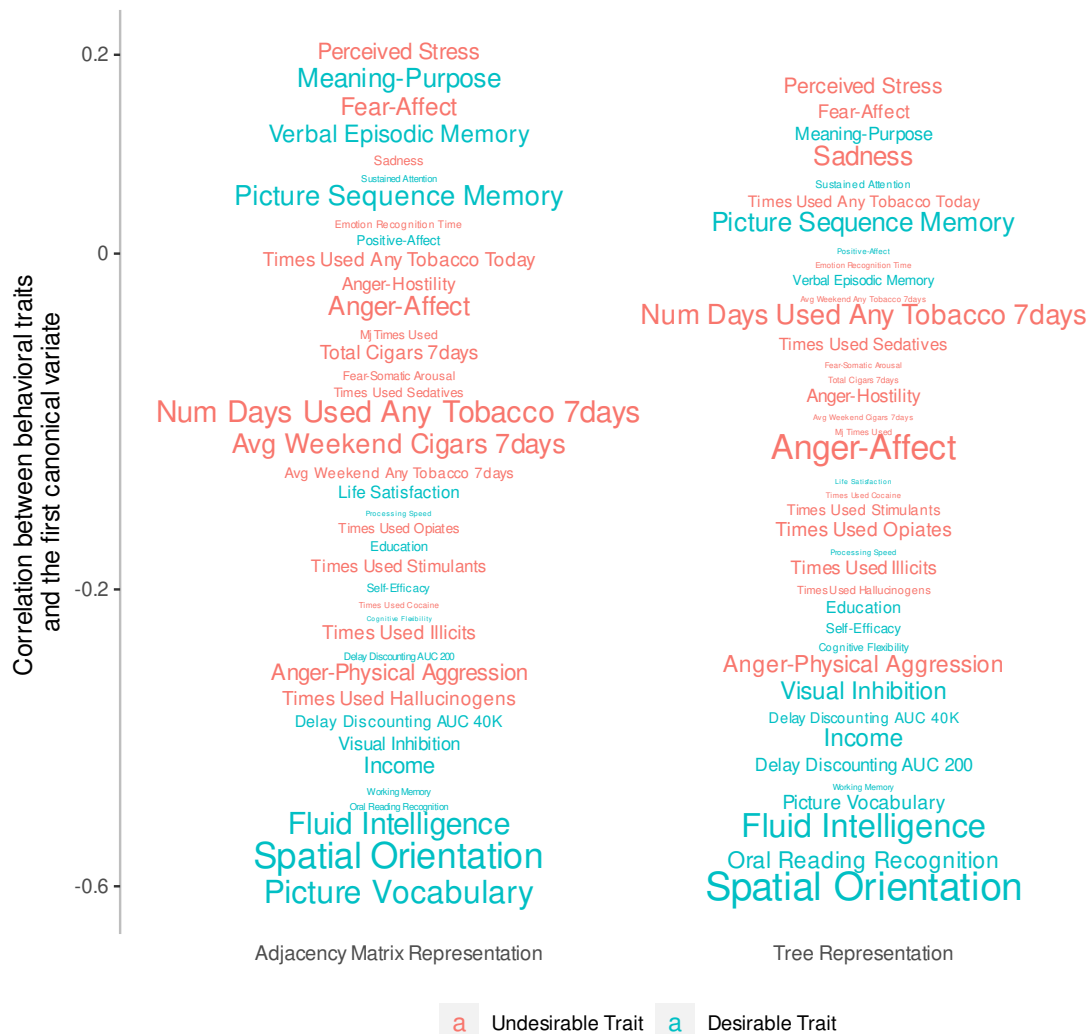}
    \caption{Correlations between behavioral traits and the first canonical variate extracted from 23 principal components of the AM compared to the 23 non-leaf nodes of the tree representation. The y-axis has been transformed so that traits do not overlap. The font size of each trait indicates the magnitude of the coefficients of a linear combination that defines the first canonical variate.}
    \label{fig:cca}
\end{figure}

\subsection{Prediction}

 Additionally, we consider the performance of these representations in predictive tasks. We hypothesize that if the tree representation preserves important information from the AM representation, they will provide comparable performance in predicting trait scores. Since in section 3.2, cognitive traits generally have the largest correlations with the first feature canonical variate, we will examine cognitive traits in more details in this section. Specifically, we include all 45 cognitive traits in the HCP data, including different metrics of the same trait. We fit a baseline model that returns the sample mean and 19 popular machine learning models (including linear regression, decision tree, SVM, ensemble trees, GP regression, and their variants) to the $D_{tree}$ and $D_{PCA}$ data. To evaluate predictive performance, we consider two scale-free metrics: 1) correlations between predictions and true outcomes and 2) the percentage of improvement in test MSE compared to the baseline predictor. We calculate these metrics using 5-fold cross validation repeated  10 times. Figure \ref{fig:MSE} shows the cross-validated predictive performance for two representative regression algorithms: linear regression and GP regression. Linear regression represents a simple, interpretable, and widely used algorithm, while GP regression is a flexible algorithm that has the best overall performance among the 19 algorithms we studied. For each algorithm and each trait, the $x$-axis is the performance of the AM representation while the $y$-axis is that of the tree representation. Points above the diagonal line $y=x$ indicate better performance by the tree representation, while points below the line indicate better performance by the AM representation.

\begin{figure}[!h]
\centering
    \includegraphics[width=\textwidth]{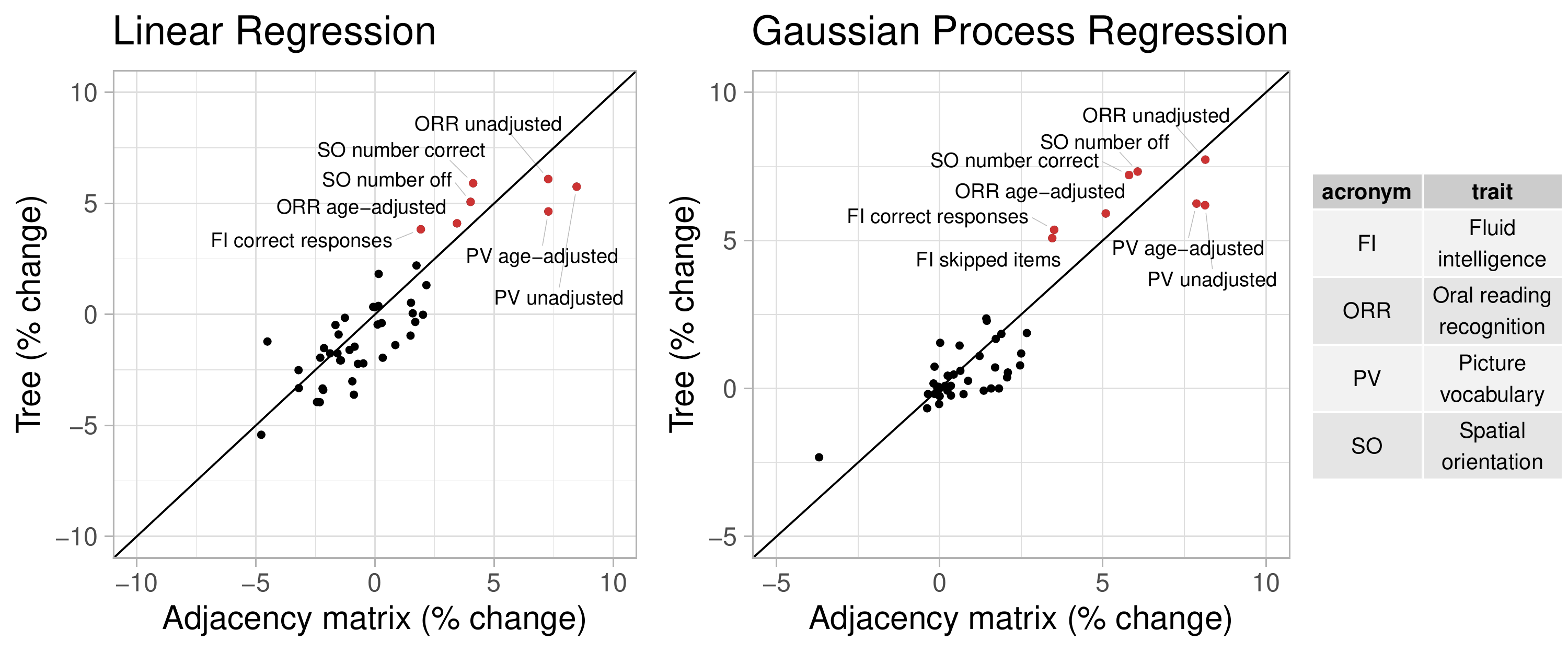}
    \caption{Percentage of change in MSE compared to baseline of linear regression and GP regression using tree and AM representation in predicting 45 cognitive traits}
    \label{fig:MSE}
\end{figure}

\begin{figure}[!h]
\centering
    \includegraphics[width=\textwidth]{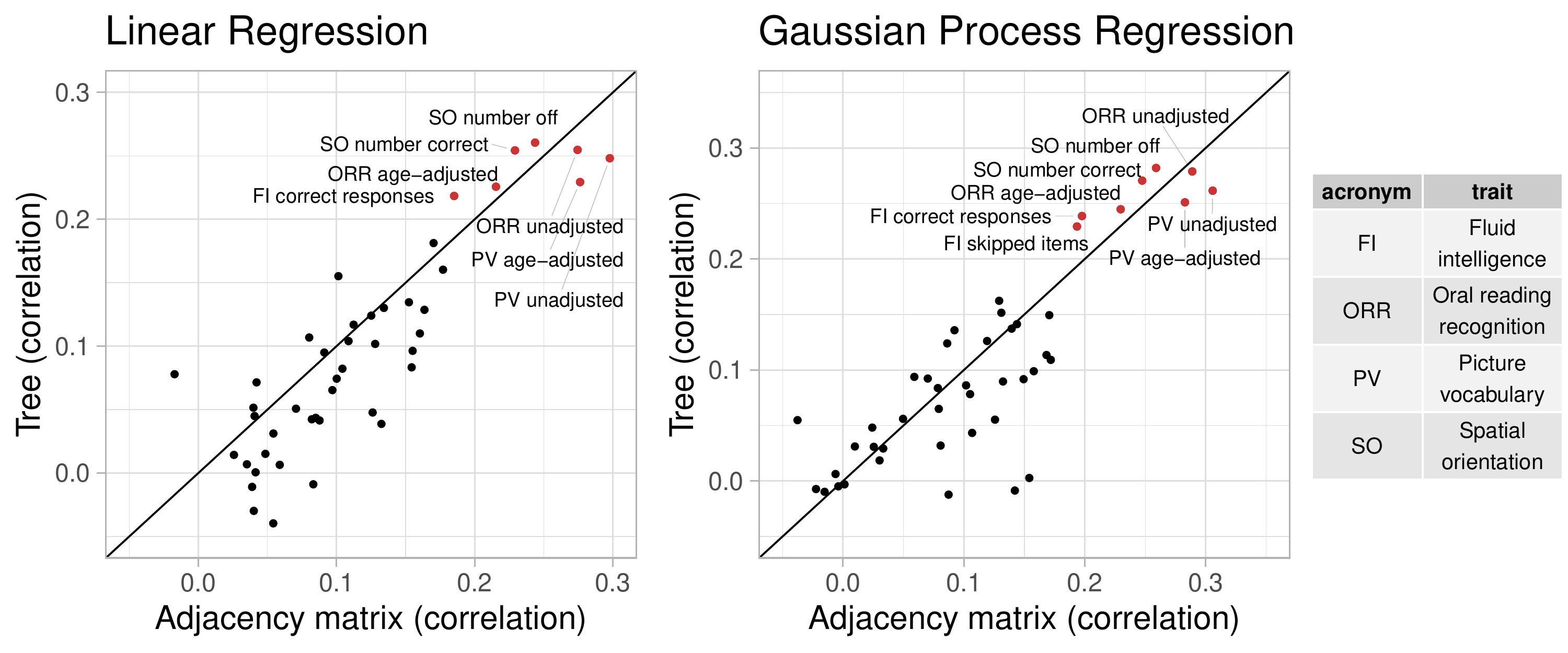}
    \caption{Correlation between predictions and observed outcomes of linear regression and GP regression using tree and AM representation in predicting 45 cognitive traits}
    \label{fig:Corr}
\end{figure}

Overall, the performance of both representations seems similar. However, for most traits, even when considering the GP regression with best overall performance, the correlation is smaller than 0.2, and the improvement in test MSE is less than 3 percent. This suggests that the vectorized brain connectivity might not be relevant to predicting most of these traits. If we focus on traits with large correlations or improvement in MSE, the tree representation has better performance in terms of both correlation and improvement in MSE for five out of eight traits with correlations greater than 0.2. These traits include fluid intelligence, picture vocabulary, spatial orientation, and oral reading recognition, which also have the largest correlations with the canonical variate in section 3.2.

\subsection{Interpretability of regression}

Finally, we compare interpretability of the two representations. Scientists are often interested in identifying structures in the connectomes that are associated with traits to answer questions such as which kind of connections might be damaged by routine use of drugs or which might be responsible for enhanced working memory. Therefore, we compare interpretability, in terms of biologically meaningful inference, of regression results using the two representations. We focus on spatial orientation (number correct), picture vocabulary (age-adjusted), fluid intelligence (correct responses), oral reading recognition (age-adjusted), and working memory (age-adjusted) because these traits show strong associations with the connectomes in the CCA and prediction tasks. 

We fit a separate linear regression with Bayesian model selection on $D_{tree}$ and $D_{PCA}$ to infer associations between brain connectomes and the traits mentioned above. Bayesian model selection accounts for uncertainty in the model selection process by posterior probabilities for the different possible models. The regression coefficients are averaged across all models, weighted by estimated model posterior probabilities. For feature selection, we define important features as those with posterior inclusion probabilities of more than 0.75. For models using $D_{tree}$, we simply interpret posterior means and credible intervals of the estimated effects of important features directly. Figure \ref{fig:infer} and Figure \ref{fig:infer2} (left column) show tree features whose colors are based on their estimated coefficients, and opacity are based on their posterior inclusion probabilities. 

For the models using $D_{PCA}$, after selecting important principal components, we can calculate a regression coefficient for each brain connection from the regression coefficients of these important principal components. Let $X$ be an $n \times p$ matrix of vectorized brain connections whose columns have been standardized. Recall that PCA is based on the singular-value decomposition of the data matrix $X = U D V^T$, where $VV^T = I, U^TU = I$ and $D$ is a diagonal matrix of $p$ non-negative singular values, sorted in decreasing order. The $j^{th}$ principal axis is the $j^{th}$ eigenvector or the $j^{th}$ column of $V$, and the $j^{th}$ principal component is the $j^{th}$ column of $UD$. With $K$ PCs, we get a rank-$K$ approximation of $X \approx U_K D_{K}V_{K}^T$ where $M_K$ contains the first $K$ columns of matrix $M$. Applying the approximation to the linear model $Y = X\boldsymbol{\beta} + \boldsymbol{\epsilon} \approx U_K D_K \boldsymbol{\theta} + \boldsymbol{\epsilon}$, we get $V_{K}^T \boldsymbol{\beta} \approx \boldsymbol{\theta}$, where $ \boldsymbol{\theta}$ is a $K$-vector of regression coefficients of the principal components. There are multiple generalized inverses $\hat{\boldsymbol{\beta}} = V_{K}  \boldsymbol{\theta} + \bf{b}$ for $\bf{b}$ such that $V_{K}^T \bf{b} = 0$. We will use $\bf{b} = 0$ to get the standard least-norm inverse $\hat{\boldsymbol{\beta}} = V_{K} \boldsymbol{\theta}$ as estimates of the regression coefficient for the original brain connections \citep{West03bayesianfactor}. We interpret the results based on 50 connections with the largest coefficient magnitude. Figure \ref{fig:infer} and Figure \ref{fig:infer2} (right column) show these connections colored by their estimated coefficients.


The tree-based model finds no structure statistically significant (with posterior inclusion probability greater than 0.75) in predicting enhanced working memory. This is consistent with previous sections showing weak signals for associations between working memory and brain connectomes. The matrix-based model finds some cross-hemispheric connections between temporal lobes positively correlated with enhanced memory, which is supported by prior research \citep{erriksson2015wmemo}. However, it also finds many right hemispheric connections negatively correlated with these scores, which contradicts some prior findings \citep{poldrack1998memory}. 
For oral reading recognition, the models find cross-hemispheric and left hemispheric connections important in both the tree and adjacency matrix representations. This is consistent with prior research that showed better reading ability associated with more cross-hemispheric connections between frontal lobes \citep{zhang2019tensor} and with increased fractional anisotropy in some left hemispheric fiber tracts \citep{yeatman2012development}. The matrix-based model additionally finds many connections across all regions in the right hemisphere to be important, among which the insula, frontal opercular, and lateral temporal lobe have previously been found to correlate with this score \citep{kristanto2020predicting}.
For fluid intelligence, the tree-based model finds connections within the left hemisphere and between hemispheres to be important. The matrix-based model finds within-hemisphere connections, especially those involving the frontal lobes, to be important. The fluid intelligence score serves as proxy for ``general intelligence" \citep{raven2000test, gershon2014nih}, which relies on many sub-networks distributed across the brain \citep{dubois2018distributed}.
For the vocabulary task, the tree-based model finds connections within the left hemisphere, while the matrix-based model finds connections within both hemispheres, to be important. Both results support existing findings that interpreting meanings of words activates many regions across the brain \citep{huth2016natural}.
Finally, in the spatial orientation task, the tree-based model finds cross-hemispheric connections to be important and positively correlated with better scores, while the matrix-based model does for connections between regions within each hemisphere (Figure \ref{fig:infer2} bottom). Both are somewhat similar to prior research that found decreased cross-hemispheric and right hemispheric connectivity to be associated with impaired spatial recognition of stroke patients \citep{ptak2020discrete}.
Overall, we observe that the results from the tree representation are easier to interpret because of the inherent low-dimensional and biologically meaningful structure. The results from the AM representation tend to be noisier, and more likely to involve negative correlations between connectivity and better performance in cognitive tasks. Potentially, the representations may better encode different kinds of information that are both important.

\begin{figure}
    \centering
    \includegraphics[height=0.9\textheight]{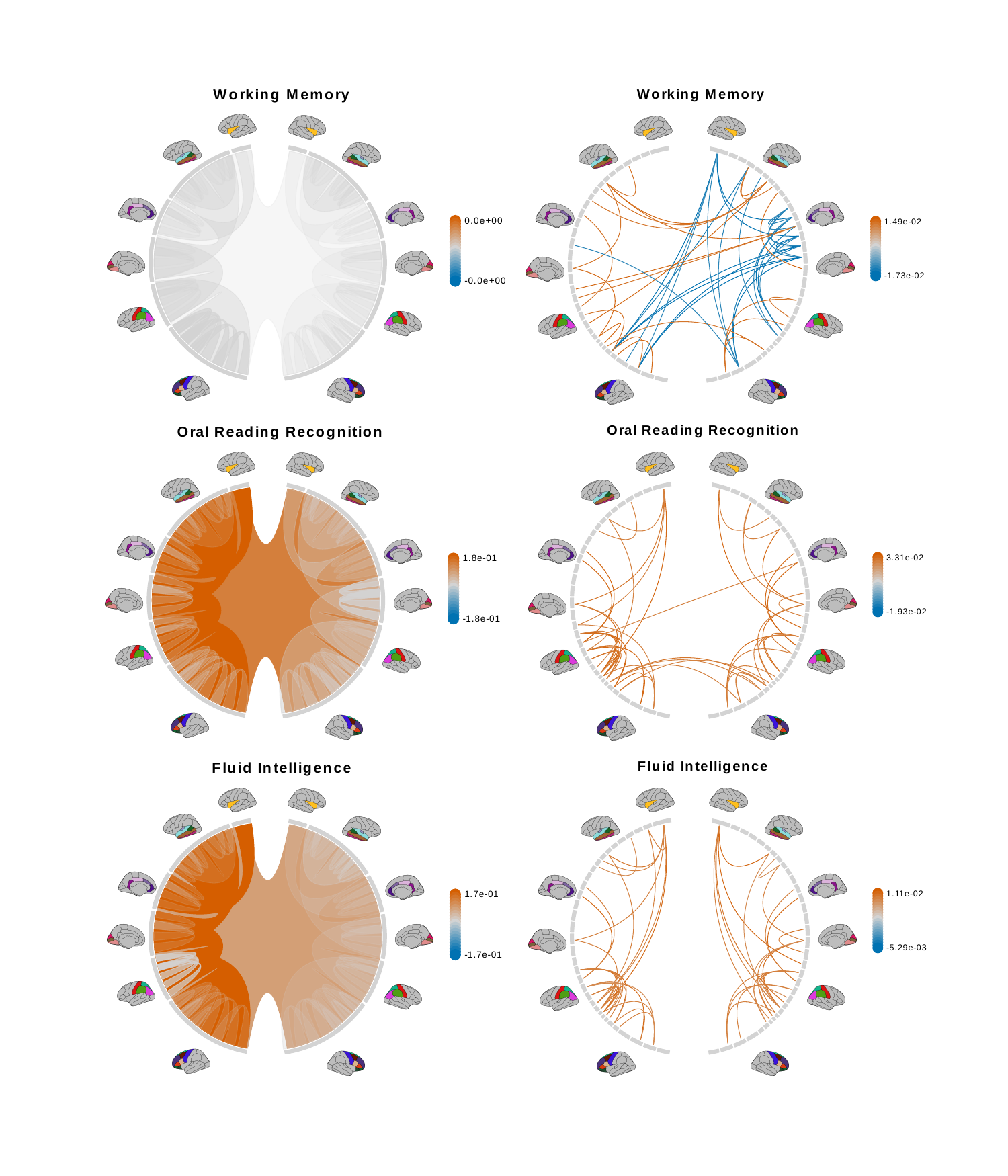}
    \caption{Brain connectome structures significantly associated with each trait inferred using the tree representation (left) and principal components of the AM representation (right). Colors represent the sign and magnitude of significant (i.e., posterior inclusion probability greater than 0.75) regression coefficients. For tree-based results, the opacity also represents the posterior inclusion probability.}
    \label{fig:infer}
\end{figure}

\begin{figure}
    \centering
    \includegraphics[height=0.6\textheight]{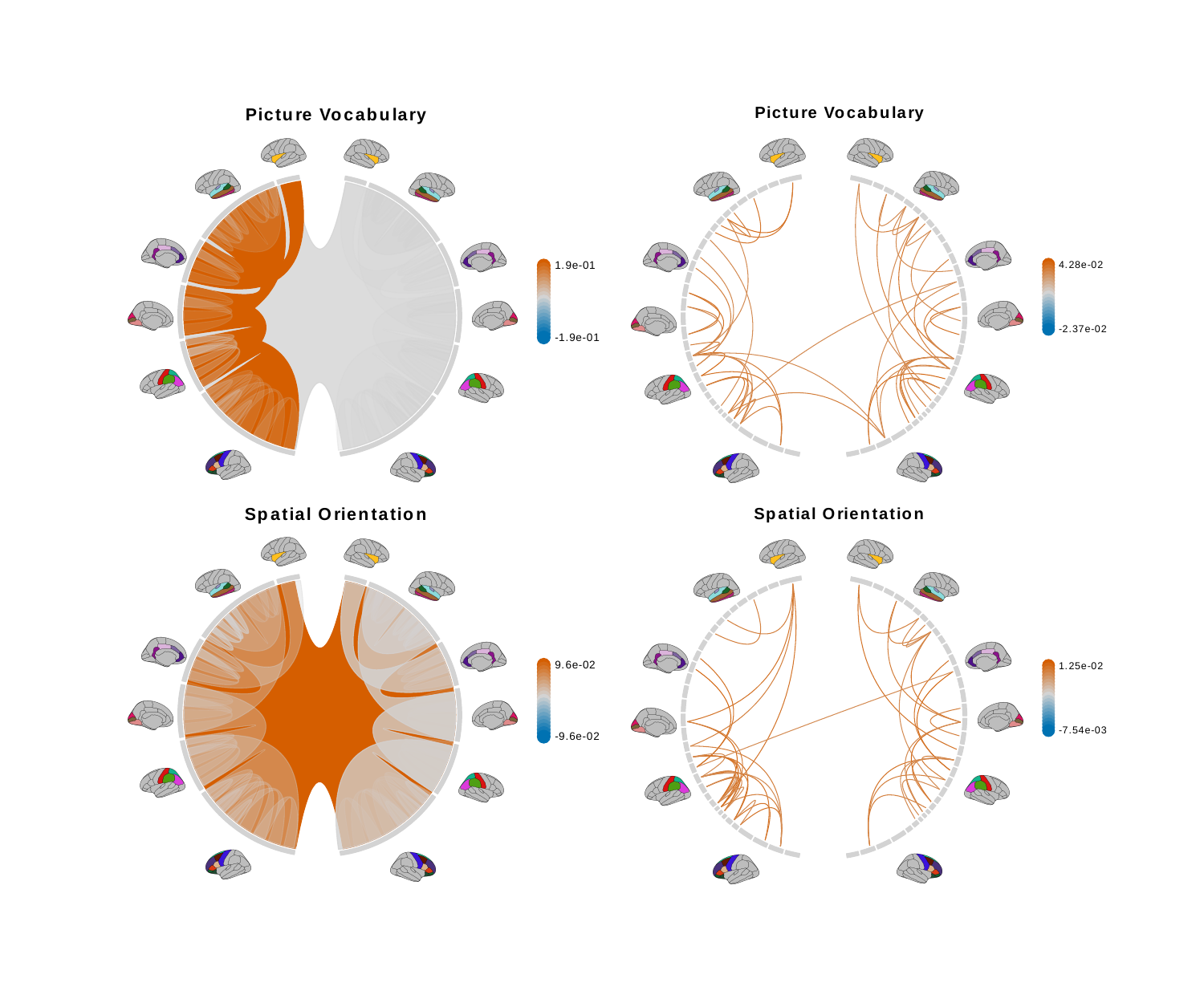}
    \caption{(Continue) Brain connectome structures significantly associated with each trait inferred using the tree representation (left) and principal components of the AM representation (right). Colors represent the sign and magnitude of significant (i.e., posterior inclusion probability greater than 0.75) regression coefficients. For tree-based results, the opacity also represents the posterior inclusion probability.}
    \label{fig:infer2}
\end{figure}

\section{Discussion}\label{sec:discussion}

We propose a novel and efficient tree representation based on persistent homology for the brain network. Through analyses of the HCP data, we show that the tree representation preserves information from the AM representation that relates brain structures to traits while being much simpler to interpret. Simultaneously, it reduces the computational cost and complexity of the analysis because of its inherent lower dimension. We believe the advantages of the tree representation will be more evident on small brain imaging data sets. 

Our new representation opens doors to new mathematical and statistical methods to analyze brain connectomes; in particular, taking into account the tree structure of the data. Topological data analysis (TDA) uses notions of shapes and connectivity to find structure in data, and persistent homology is one of the most well-known TDA methods \citep{wasserman2017tda}. TDA has been used successfully in studying brain networks \citep{saggar2018tda, gracia2020tda}, but we provide a fundamentally different approach.
Our analyses of the connectome trees in this paper are simplistic. We treat tree nodes as independent and non-interacting. Future work should consider the tree structure to enforce dependence between the nodes, and hence, between their effects on behavioral traits. The tree structure may also be exploited to model interactions between connectome structures across different scales. For instance, Bayesian treed models are flexible, nonparametric methods that have found widespread and successful applications in many domains \citep{linero2017bart}. Existing treed models might prove unwieldy to fit and interpret on AM-based brain networks but their modifications may fit the nature of our tree representation well.

\section{Appendix}

\subsection{Hierarchy in one hemisphere based on the Desikan-Killiany protocol}

\begin{figure}[ht!]
    \centering
    \includegraphics[width=0.8\linewidth, trim={0.75 3in 0.5 0.5in},clip]{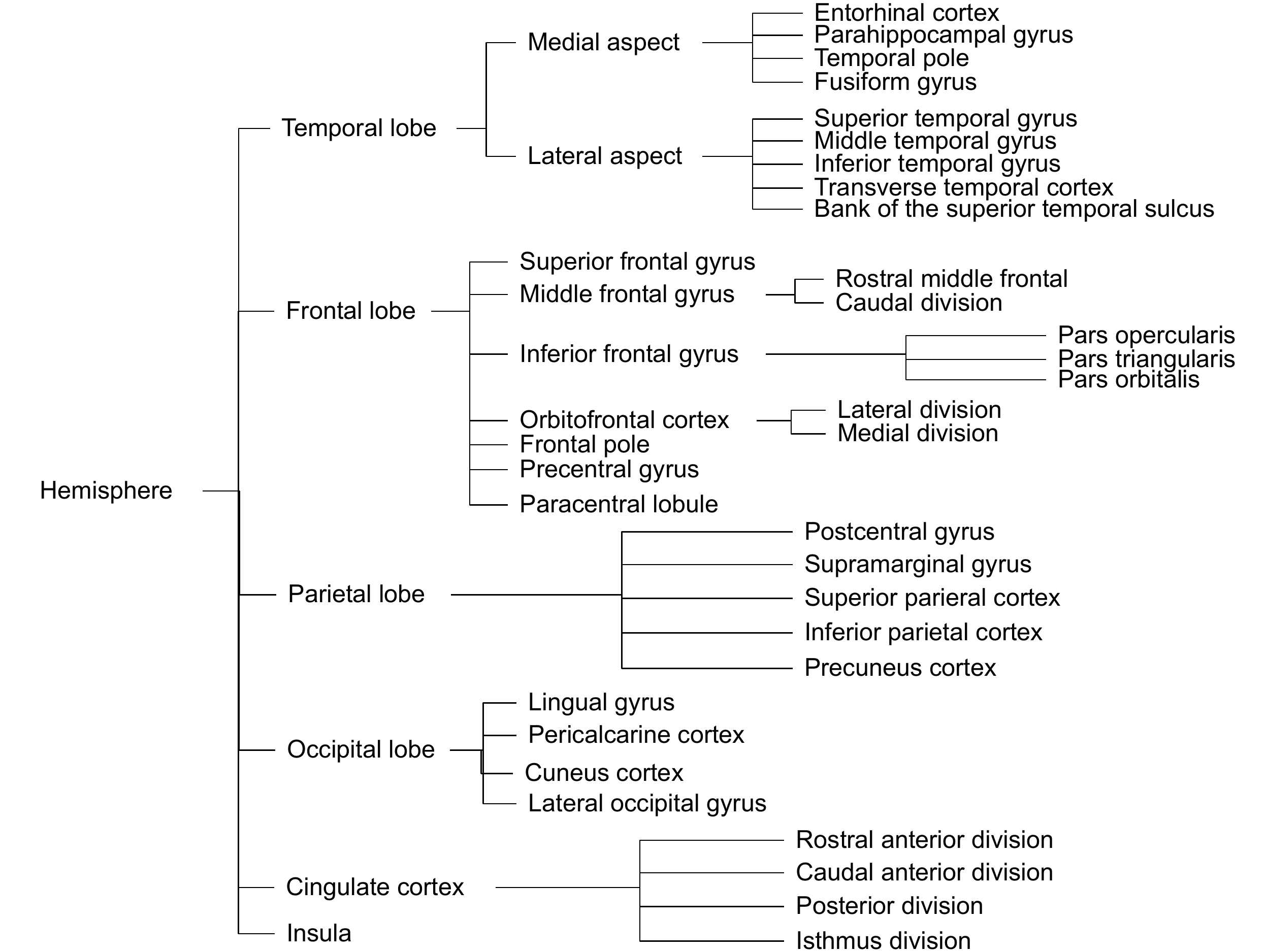}
    \label{fig:appx-brain-tree}
    \caption{The full brain hierarchy based on the Desikan-Killiany protocol for one hemisphere.}
\end{figure}

\subsection{Proof of theorem in Section 2}

We first restate Theorem \ref{thm:persistent} in topological language. Let $x^l_j\in A^l_j$ be any location in brain region $A^l_j$ and any fiber inside $A^l_j$ is treated as a loop at $x^l_j$, denoted by $c^l_{j(1)},\cdots,c^l_{j(N^l_j)}$. So topologically speaking each region $A^l_j$ consists of a point ($0$-cell) and $N^l_j$ loops ($1$-cell). Assume $A^l_i$ and $A^l_j$ are children of $A^{l-1}_k$, then we have a inclusion map from the children to the parent: $F^l_k: A^l_i\cup A^l_j\rightarrow A^{l-1}_k $, $F(x^l_j)=F(x^l_i)=x^{l-1}_k$, so $F^l_k$ induces a homomorphism between chain groups:
$F^l_k: C_p(A^l_i\cup A^l_j)\rightarrow C_p(A^{l-1}_k)$, $p=0,1$. Observe that $F$ maps all loops $c^l_{j(k)}$ and $c^l_{i(k)}$ to loops in $A^{l-1}_k$ while any fiber (path) connecting $A^l_i$ and $A^l_j$ is also mapped to loops in $A^{l-1}_k$. 
\begin{manualtheorem}{\ref{thm:persistent}'}\label{thm':persistent}
	$H^l_k$ is the corank of the persistent homology ${F^l_{k}}_*\left(H_p(A^l_i\cup A^l_j)\right)\subset H_p(A^{l-1}_k)$.	
\end{manualtheorem}

Before proving the theorem, we prove the following useful lemma:
\begin{lemma}\label{chainmap}
	$F$ defined above is a chain map so it induces a homomorphism between homology groups:
	$${F^l_{k}}_*: H_p(A^l_i\cup A^l_j)\rightarrow H_p(A^{l-1}_k),\;p=0,1.$$
\end{lemma}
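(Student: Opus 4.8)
The plan is to realize $F$ as a genuine continuous (indeed cellular) map of CW complexes and then obtain the chain map by functoriality of the cellular/singular chain complex. First I would pin down the topological models precisely: each region $A^l_j$ is the CW complex with a single $0$-cell $x^l_j$ and $N^l_j$ one-cells $c^l_{j(1)},\dots,c^l_{j(N^l_j)}$, each attached by the constant map to $x^l_j$, so $A^l_j$ is a wedge of $N^l_j$ circles; consequently $A^l_i\cup A^l_j$ is the disjoint union of two such wedges, and $A^{l-1}_k$ is a wedge of $N^l_i+N^l_j+H^l_k$ circles based at $x^{l-1}_k$, namely the child loops together with one loop for each fiber joining $A^l_i$ and $A^l_j$. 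With these models fixed, $F^l_k$ is the map sending $x^l_i,x^l_j\mapsto x^{l-1}_k$ and each child loop to the corresponding loop of the parent; I would check that it is well defined and continuous — it is the quotient identifying the two basepoints, followed by inclusion of a subcomplex, so despite being called an ``inclusion'' it is not injective — and that it is cellular, i.e.\ it carries the $p$-skeleton into the $p$-skeleton for $p=0,1$.

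Next I would record the chain complexes. Since every $1$-cell of each of the three complexes is a loop based at the unique $0$-cell of its component, the cellular boundary $\partial_1\colon C_1\to C_0$ is identically zero in all three cases, and $\partial_0=0$ always. Hence each cellular chain complex is concentrated in degrees $0$ and $1$ with vanishing differentials, and ``$F$ is a chain map'' reduces to the statement that the cellular map $F^l_k$ induces group homomorphisms $C_p(A^l_i\cup A^l_j)\to C_p(A^{l-1}_k)$ for $p=0,1$ (it does, since a cellular map always does), the compatibility $F\circ\partial=\partial\circ F$ being automatic because both sides vanish. Passing to homology then gives ${F^l_{k}}_*\colon H_p(A^l_i\cup A^l_j)\to H_p(A^{l-1}_k)$ for $p=0,1$, as claimed. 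If one prefers not to commit to a CW structure, the same conclusion follows from functoriality of singular chains: $F_\#(\sigma)=F\circ\sigma$, extended linearly over singular simplices, satisfies $\partial_p F_\#=F_\#\partial_p$ by the usual face-map identity, which already defines the induced maps on $H_p$.

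I expect the only real work to be bookkeeping rather than topology: one must interpret ``a fiber inside $A^l_j$ is treated as a loop at $x^l_j$'' consistently — each intra-region fiber collapsed to a based loop, each inter-region fiber becoming a based loop in the parent once the basepoints are glued — so that $F$ is literally a well-defined map and not merely defined up to homotopy. That is the main (minor) obstacle; once the spaces and $F$ are nailed down, the chain-map property is immediate, either from $\partial\equiv 0$ in the cellular model or from functoriality of singular chains. The lemma then feeds directly into Theorem~\ref{thm':persistent}: one computes that $H_1(A^{l-1}_k)$ is free of rank $N^l_i+N^l_j+H^l_k$ while the image of ${F^l_{k}}_*$ is the subgroup generated by the $N^l_i+N^l_j$ child loops, so the corank of the image in degree $1$ is exactly the number of fibers connecting $A^l_i$ and $A^l_j$, i.e.\ $H^l_k$.
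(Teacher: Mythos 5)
Your proof is correct, and it reaches the same conclusion by a slightly cleaner route than the paper. The paper verifies the chain-map identity $\partial\circ F^l_k=F^l_k\circ\partial$ by hand on $1$-chains, splitting into two cases: loops (both sides vanish because $F^l_k(c)$ is a loop) and non-loop paths from $x^l_j$ to $x^l_i$ (both sides again vanish because the endpoints are identified). Notably, the paper thereby treats the inter-region fibers as non-loop elements of $C_1(A^l_i\cup A^l_j)$ with boundary $x^l_i-x^l_j$, even though the stated model of $A^l_i\cup A^l_j$ is a disjoint union of two wedges of circles; your model excludes those paths from the child complex, which is the reading actually consistent with the rank count $\rank H_1(A^l_i\cup A^l_j)=N^l_i+N^l_j$ used later in the proof of Theorem~\ref{thm':persistent}. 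Your two shortcuts --- observing that every cellular differential is identically zero because each $1$-cell is a based loop, so the commutation is automatic, or alternatively realizing $F^l_k$ as a continuous (cellular) map and invoking functoriality of singular/cellular homology --- both render the case analysis unnecessary, and your remark that $F^l_k$ is a basepoint-identifying quotient followed by inclusion of a subcomplex (hence not literally an ``inclusion'') is a worthwhile clarification the paper glosses over. The trade-off is only one of style: the paper's hands-on check is self-contained and elementary, while yours leans on standard machinery and makes the well-definedness of $F$ explicit; your closing rank computation also matches the paper's proof of the theorem.
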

\begin{proof}
	Since the chain group is nontrivial for $p=0$ and $p=1$ only, it suffices to check $\partial\circ F^l_k(c)=F^l_k\circ \partial (c)$ for any $c\in C_1(A^l_i\cup A^l_j)$. 
	
	If $c$ is a loop, then $F^l_k(c)$ is also a loop by the construction of $F^l_k$, so we have 
	$\partial\circ F(c)=x^{l-1}_k-x^{l-1}_k=0$,  $F^l_k\circ \partial (c)=F^l_k(0)=0$. 
	
	If $c$ is not a loop, we assume $c$ starts from $x^l_j$ ends at $x^l_i$, so $F^l_k(c)$ is still a loop, so we have
	$\partial\circ F^l_k(c)=x^{l-1}_k-x^{l-1}_k=0$. Since $F^l_k\circ \partial (c)=F^l_k(x^l_i-x^l_j)=x^{l-1}_k-x^{l-1}_k=0$, we conclude that $\partial\circ F^l_k=F^l_k\circ \partial$.
\end{proof}
Now we can prove Theorem \ref{thm':persistent}:
\begin{proof}[Proof of Theorem \ref{thm':persistent}]
	From the proof of Lemma \ref{chainmap}, $\rank(H_p(A^{l-1}_k))$ is the sum of the number of fibers in $A^l_i$ and $ A^l_j$ as well as the number of fibers connecting $A^l_i$ and  $A^l_j$. Since all homologous classes in $H_p(A^l_i\cup A^l_j)$ are mapped to nontrivial homologous classes in $H_p(A^{l-1}_k)$, $\rank\left({F^l_{k}}_*\left(H_p(A^l_i\cup A^l_j)\right)\right)=\rank \left(H_p(A^l_i\cup A^l_j)\right)$, which is equal to the number of fibers in fibers in $A^l_i$ and $ A^l_j$. As a result, 
	$$\corank\left({F^l_{k}}_*\left(H_p(A^l_i\cup A^l_j)\right)\right)=\rank(H_p(A^{l-1}_k))-\rank\left({F^l_{k}}_*\left(H_p(A^l_i\cup A^l_j)\right)\right)=H^{l}_k.$$
\end{proof}

\subsection{Table of 45 traits used in section 3.2 analysis}

Table \ref{tab:appx-hcp-variable-names} lists the different traits used in our analyses grouped into categories; we provide the names used in the HCP data files along with a link to look up a detailed definition of each of these traits.

\begin{table}[h!]
    \centering
    \begin{tabular}{|m{3cm}|m{12cm}|}
        \hline
        Trait category & HCP column names \\ 
        \hline
        Cognition & PMAT24\_A\_CR, ReadEng\_AgeAdj, PicVocab\_AgeAdj, IWRD\_TOT, ProcSpeed\_AgeAdj, DDisc\_AUC\_200, DDisc\_AUC\_40K, VSPLOT\_TC, SCPT\_TPRT, ListSort\_AgeAdj, PicSeq\_AgeAdj, SSAGA\_Educ, SSAGA\_Income, CardSort\_AgeAdj, Flanker\_AgeAdj\\ 
        \hline
        Emotion &  ER40\_CRT, AngAffect\_Unadj, AngHostil\_Unadj, AngAggr\_Unadj, FearAffect\_Unadj, FearSomat\_Unadj, Sadness\_Unadj, PercStress\_Unadj, SelfEff\_Unadj, LifeSatisf\_Unadj, MeanPurp\_Unadj, PosAffect\_Unadj\\ 
        \hline
        Tobacco use &  SSAGA\_TB\_Age\_1st\_Cig,  SSAGA\_TB\_DSM\_Difficulty\_Quitting, SSAGA\_TB\_Max\_Cigs,  SSAGA\_TB\_Reg\_CPD, SSAGA\_TB\_Yrs\_Smoked, Times\_Used\_Any\_Tobacco\_Today,  Avg\_Weekend\_Any\_Tobacco\_7days, Total\_Cigars\_7days, Avg\_Weekend\_Cigars\_7days, Num\_Days\_Used\_Any\_Tobacco\_7days\\
        \hline
        Drug use & SSAGA\_Times\_Used\_Illicits, SSAGA\_Times\_Used\_Cocaine,  SSAGA\_Times\_Used\_Hallucinogens, SSAGA\_Times\_Used\_Opiates,  SSAGA\_Times\_Used\_Sedatives, SSAGA\_Times\_Used\_Stimulants,  SSAGA\_Mj\_Age\_1st\_Use, SSAGA\_Mj\_Times\_Used\\
        \hline
    \end{tabular}
    \caption{Column names from the HCP data file of traits used in our analysis, grouped by categories defined by their meanings. Since there are many metrics for the same trait, we provide the column names so that their exact definitions can be looked up at \url{https://wiki.humanconnectome.org/display/PublicData/HCP-YA+Data+Dictionary-+Updated+for+the+1200+Subject+Release}.}
    \label{tab:appx-hcp-variable-names}
\end{table}

\section*{Acknowledgement}
This work was partially supported by NIH grant R01-MH118927 by the United States National Institute of Mental Health and P30ES010126 by National Institute of Environmental Health Sciences.

\newpage
\bibliographystyle{apalike}
\bibliography{ref.bib}
	
\end{document}